\theoremstyle{definition}
\newtheorem{lemma}{Lemma}
\newtheorem{corollary}{Corollary}
\newtheorem{Theorem}{Theorem}
\begin{document}

\title{New angular momentum coupling method based on Wigner rotation theory}

\author{Junchao Guo}
\email[]{ guojunchao0704@sjtu.edu.cn }
\affiliation{School of Physics and Astronomy, Shanghai Jiao Tong University, Shanghai 200240, China}

\author{ Yang Sun }
\email[]{ sunyang@sjtu.edu.cn }\affiliation{School of Physics and Astronomy, Shanghai Jiao Tong University, Shanghai 200240, China}

\date{\today}

\begin{abstract}
We present a new method for constructing the total angular momentum of many-nucleon states. We find that the restrictions imposed by the fermion antisymmetry on the total state are fully absorbed into the single-$j$ space when the broken rotational symmetry of the product state is restored by angular momentum projection. For different $j$-shells, any total angular momentum that obeys the selection rule is allowed, just as for non-identical particles. The method based on this reorganization is conceptually different from the traditional $J$ and $m$ schemes and may help to improve the efficiency of angular momentum coupling in nuclear many-body calculations.
\end{abstract}

\maketitle

\section{Introduction}

Many-body theories describing atomic nuclei respect rotational symmetry. The Hamiltonian commutes with the rotation operator as well as its generator, i.e. the angular momentum operator. However, this may not be the case for the initial wave functions in the nuclear shell model. Even if the single-particle wave functions are generated from a spherically symmetric potential, the many-body basis (Slater determinants) are product states, which are not eigenstates of angular momentum. Although the mathematical method (the so-called Coefficients of Fractional Parentage, CFP) for construction of
good angular momentum in the many-body basis (i.e. the $J$-scheme) is well developed \cite{Talmi1963}, this method, with increasing number of active nucleons, soon becomes extremely complex in practical applications. Consequently, the $J$-scheme has been shelved and is not much used in shell model calculations today.

In the popular $m$-scheme \cite{Talmi1963}, the tedious angular momentum coupling using CFP is abandoned. It allows admixture of angular momenta $J$ with $J\geq |M|$, so that the basis is no longer irreducible under rotation transformation. Working with the scalar quantity $M$ is pleasant; however, the price one has to pay is the breaking of rotational symmetry in the manybody basis. When the Hamiltonian is diagonalized, the obtained eigenstates contain all possible $J$'s simultaneously. This means that, to cover all relevant $J$ representations in the $m$-scheme, one works with an inefficient basis expressed in a large configuration space. As the name, Large Scale Shell Model (LSSM), for models using the $m$-scheme suggests, one usually has to diagonalize a huge Hamiltonian matrix up to a dimension of $10^9$ for medium-mass nuclei \cite{Caurier2005} (record of $10^{10}$ or larger in dimension reported later by Mizusaki {\it et al.} \cite{Mizusaki2010}, Brown and Rae \cite{Brown2014}, Shimizu \cite{Shimizu2020}), which is obviously the main reason that this method is inapplicable to heavier nuclei.
 
The problem becomes more serious if we are not only interested in a small number of low-lying states close to the ground state, but also in dense highly-excited states, which are often involved in reaction calculations, $\beta$-decay with high $Q$ values, and nuclear fission calculations. In such cases, spin distribution \cite{Egidy2009,Grimes2016,Stetch2014} for certain excitations is a long-standing unsolved problem.

On the other hand, the theory of angular momentum projection (AMP) \cite{Guidry2022} is a well-known many-body technique for restoring broken rotational symmetry for various reasons \cite{Ring-Schuck,Sun2016}. AMP uses quantum mechanical terms to describe the rotational motion and angular momentum of many-body states, which should be consistent with the construction elements of the shell model basis in a sense. However, the possibility of using AMP to build many-body states with good angular momentum for the spherical shell model does not seem to have attracted much attention.

In this paper, we propose a new method to calculate total angular momentum for many-nucleon states of arbitrary spherical $j$-shells using AMP theory. We discuss the foundations of the theory and show rigorous mathematical proofs of angular momentum coupling in fermion systems. Through examples, we demonstrate how our theory changes the idea of the traditional $J$- and $m$-schemes that are widely adopted for spherical shell model calculations. The basic elements of AMP theory and some examples of angular momentum calculation are given in the Appendix.

\section{ Angular momentum projection}

Let us start our discussion with the angular momentum projection (AMP) operator \cite{Guidry2022} 
\begin{equation} \label{AMP}
\begin{aligned}
\hat P^J_{MK} & 
& = \frac{2J+1}{8\pi^2}\int d\Omega\ D_{MK}^{J}(\alpha, \beta, \gamma) \hat R(\alpha,\beta,\gamma) ,
\end{aligned}
\end{equation}
where $\hat R$ is the rotation operator and $D_{MK}^{J}$ the Wigner $D$-function \cite{AMbook}. More about AMP can be found in Appendix A. For now, it is sufficient to know that $\hat P^J$, when applied to  a product state, extracts $J$ (total angular momentum) components from such the rotational-symmetry violated state. 

Suppose in a Hilbert space, there is a complete set of manybody states $\{\ket{\phi_1}, \ket{\phi_2}\dots\ket{\phi_n}\}$, which are not eigenstates of angular momentum. We assume they are eigenstates of $J_z$ and have the same $z$-projection quantum number $M$. Formally, we can construct angular momentum eigenstates as a linear combination of $\{\ket{\phi_1}, \ket{\phi_2}\dots\ket{\phi_n}\}$,
\begin{equation}
\ket{\varphi_i} = \sum_j A_{ij}\ket{\phi_j}.\nonumber
\end{equation}
Since $\ket{\varphi_i}$ is an eigenstate of angular momentum, it is also an eigenstate of the AMP operator.
This means that we can find angular momentum eigenstates with good $J$ by diagonalizing the matrix of the AMP operator sandwiched by the $M$-states,
\begin{equation}\label{Norm}
\bra{\phi_i}\hat P^J_{MM}\ket{\phi_j}.
\end{equation}
We call Eq. (\ref{Norm})  Norm matrix \cite{Hara-Sun}, which carries all information about angular momentum for $J$-mixed states $\{\ket{\phi_i}\}$.

This is the basic logic of using angular momentum projection to get angular momentum eigenstates. The key is to calculate matrix elements of the projection operator.

\section{ A warm-up example of AMP: a single-$j$ case}

Single particle states in a spherical basis are labeled by angular momentum quantum numbers $j$ and $m$. Consider a single $j$-shell, $j=\frac{7}{2}$, with $2j+1=8$ single particle states. Now suppose one puts four fermions in this shell. The product states,
\[
\ket{\phi_a} = a_{jm_1}^\dagger a_{jm_2}^\dagger a_{jm_3}^\dagger a_{jm_4}^\dagger \ket{0},
\]
form a complete and orthogonal basis in the Hilbert space, with the dimension $\dbinom{8}{4}=70$. Note that these product states are eigenstates of $J_z$ labeled by $M$, but not  eigenstates of $J$. In other words, rotational symmetry is violated in $\ket{\phi_a}$.

To restore the rotational symmetry, one applies the projection operator, $\hat P^J$, onto the product states. This requires the calculation of matrix elements
\begin{equation}\label{single-j-Matrix}
\bra{\phi_a} \hat P_{M_a M_b}^J\ket{\phi_b},
\end{equation}
where $\ket{\phi_b}$ is another product state in the Hilbert space. According to the integral form of the projection operator in \textcolor{red}{Eq. (\ref{AMP}) } (see also Appendix A), the problem is reduced to the calculation of matrix elements of the rotation operator, $e^{-i\beta J_y}$, in this product basis. The result has a rather pleasant form,
\begin{equation}\label{det}
\bra{\phi_a}e^{-i\beta J_y}\ket{\phi_b} = \operatorname{det} (S^j),
\end{equation}
where $S^j$ with $j={7\over 2}$ is the submatrix constructed from the Wigner $d$-matrix, $d^{j=7/2}$, by choosing the rows corresponding to the $m$ values in $\ket{\phi_a}$ and columns corresponding to the $m$ values in $\ket{\phi_b}$. This result is both compact in theory and efficient in real calculations. One may also understand this result from the perspective of the Grassmann algebra \cite{Grassmann}. 

As a minimal example with $j={3\over 2}$, we take $\ket{\phi_a} = a_{\frac{3}{2}\frac{3}{2}}^\dagger a_{\frac{3}{2}\frac{1}{2}}^\dagger \ket{0}$ and $\ket{\phi_b} = a_{\frac{3}{2}\frac{-3}{2}}^\dagger a_{\frac{3}{2}\frac{-1}{2}}^\dagger \ket{0}$. The corresponding submatrix $S^{3\over 2}$ for this example is:
\[
\large
\begin{bmatrix}
d^{\frac{3}{2}}_{\frac{3}{2}\frac{-3}{2}} & \quad d^{\frac{3}{2}}_{\frac{3}{2}\frac{-1}{2}} \vspace{0.5cm} \\ 
d^{\frac{3}{2}}_{\frac{1}{2}\frac{-3}{2}}  & \quad d^{\frac{3}{2}}_{\frac{1}{2}\frac{-1}{2}} 
\end{bmatrix}.
\]

As another example to illustrate the efficiency of our method, let us consider a special case - a fully-occupied $j$-shell:
\[
\ket{\phi}=\prod_{m=-j}^{j}a_{jm}^{\dagger}\ket{0}.
\]
According to Eq. (\ref{det}), the expectation value of the rotation operator is 
\[
\bra{\phi}e^{-i\beta J_y}\ket{\phi}=\operatorname{det}\left(d^j(\beta)\right),
\]
which is exactly the $j$-representation of the rotation operator. As an element of the SO(3) group, it has the property of $\operatorname{det}d^j(\beta)=1$, so that we know immediately that $\bra{\phi}e^{-i\beta J_y}\ket{\phi}=1$, which shows that this state is a $J = 0$ state.

\section{ AMP for manybody states of multiple $j$-shells }

We now present the core idea of the present study: constructing angular-momentum states of any multiple $j$-shells via angular momentum projection. With mathematical proofs, we first introduce a lemma and two corollaries, and then a theorem.

\begin{lemma}
Suppose there are two different $j$-shells, $j_1$ and $j_2$. By defining
\begin{equation}
\begin{aligned}
\phi_1^\dagger &= a_{j_1, m_1}^\dagger a_{j_1, m_2}^\dagger\cdots a_{j_1, m_s}^\dagger, \nonumber\\
\phi_2^\dagger &= a_{j_2, k_1}^\dagger a_{j_2, k_2}^\dagger \cdots a_{j_2, k_t}^\dagger, \nonumber
\end{aligned}
\end{equation}
we write $\ket{\phi_1} = \phi_1^\dagger\ket{0}$ as a general $s$-particle state in $j_1$-shell and $\ket{\phi_2}$ a $t$-particle state in $j_2$-shell. Similarly, we write $\ket{\phi_1^\prime}$ and $\ket{\phi_2^\prime}$ as another $s$-particle state in $j_1$ and $t$-particle state in $j_2$, respectively. 
For total product states,
\begin{equation}
\begin{aligned}
\ket{\phi} &= a_{j_1, m_1}^\dagger a_{j_1, m_2}^\dagger\cdots a_{j_1, m_s}^\dagger a_{j_2, k_1}^\dagger a_{j_2, k_2}^\dagger \cdots a_{j_2, k_t}^\dagger\ket{0} \nonumber\\
&= \phi_1^\dagger \phi_2^\dagger \ket{0} \nonumber
\end{aligned}
\end{equation}
and
\begin{equation}
\ket{\phi^\prime} =  {\phi_1^\prime}^\dagger {\phi_2^\prime}^\dagger \ket{0}, \nonumber
\end{equation}
the matrix of the rotation operator $e^{-i\beta J_y}$ in this two-$j$ Hilbert space can be decomposed into
\begin{equation}
\bra{\phi^\prime}e^{-i\beta J_y}\ket{\phi} = \bra{\phi_1^\prime} e^{-i\beta J_y}\ket{ \phi_1} \bra{\phi_2^\prime} e^{-i\beta J_y}\ket{ \phi_2} .
\end{equation}

\end{lemma}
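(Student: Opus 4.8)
The plan is to exploit two facts about the rotation operator $\hat R = e^{-i\beta J_y}$: that the vacuum is rotationally invariant, $\hat R\ket 0 = \ket 0$, and that $\hat R$ does not connect different $j$-shells. Concretely, $J_y = J_y^{(1)} + J_y^{(2)}$ splits into commuting pieces acting on the two shells, and the single-particle transformation law $\hat R\, a_{j,m}^\dagger\, \hat R^{-1} = \sum_{m'} d^{j}_{m'm}(\beta)\, a_{j,m'}^\dagger$ shows that conjugating any $j_i$-shell operator by $\hat R$ returns a linear combination of $j_i$-shell operators only. I would write $\tilde\phi_i^\dagger \equiv \hat R\,\phi_i^\dagger\,\hat R^{-1}$, which is again a product of $s$ (for $i=1$) or $t$ (for $i=2$) creation operators living entirely in the $j_i$-shell.

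First I would strip the rotation down to the creation operators. Writing $\bra{\phi'} = \bra 0\,\phi_2'\,\phi_1'$ and inserting $\hat R^{-1}\hat R = 1$ between the factors of $\ket\phi = \phi_1^\dagger\phi_2^\dagger\ket 0$, vacuum invariance gives
\begin{equation}
\bra{\phi'}\hat R\ket\phi = \bra 0\,\phi_2'\,\phi_1'\,\tilde\phi_1^\dagger\,\tilde\phi_2^\dagger\,\ket 0. \nonumber
\end{equation}
Next I would regroup the operators by shell. The block $\phi_1'\tilde\phi_1^\dagger$ contains $2s$ operators and hence has even fermion parity, so it commutes with $\phi_2'$; moving $\phi_2'$ to its right therefore costs no sign (equivalently, the cross-shell anticommutations contribute $(-1)^{2st}=1$), yielding $\bra 0\,(\phi_1'\tilde\phi_1^\dagger)(\phi_2'\tilde\phi_2^\dagger)\ket 0$.

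The factorization then follows from particle-number counting. Since $\tilde\phi_2^\dagger$ creates $t$ particles in the $j_2$-shell and $\phi_2'$ removes $t$, the even operator $A_2 \equiv \phi_2'\tilde\phi_2^\dagger$ maps the one-dimensional zero-particle sector into itself, so $A_2\ket 0 = \bra 0 A_2\ket 0\,\ket 0$. Pulling this scalar out leaves $\bra 0\,\phi_1'\tilde\phi_1^\dagger\ket 0\,\bra 0\,\phi_2'\tilde\phi_2^\dagger\ket 0$, and reinstating $\hat R^{-1}\hat R = 1$ in each factor identifies them as $\bra{\phi_1'}\hat R\ket{\phi_1}$ and $\bra{\phi_2'}\hat R\ket{\phi_2}$, which is the claim.

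I expect the main obstacle to be the careful bookkeeping of the fermionic signs in the regrouping step. The decomposition works precisely because the surviving matrix element only pairs an equal number of creation and annihilation operators within each shell, so every shell-block carries even parity and the cross-shell anticommutations cancel; were one shell to contribute an odd-parity block, the two sectors would not cleanly separate. I would therefore make the even-parity observation explicit before invoking it, and note that the same argument extends by induction to any number of distinct $j$-shells, which I anticipate the corollaries will formalize.
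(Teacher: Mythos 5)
Your proof is correct, but it takes a genuinely different route from the paper's. The paper leans on the determinant representation established in Section III: it observes that the matrix $S$ of single-particle overlaps $d^j_{m'm}$ is block-diagonal in the two shells (since each $j$ carries a distinct irrep of SO(3)), so $\bra{\phi'}e^{-i\beta J_y}\ket{\phi}=\det S=\det S^{j_1}\det S^{j_2}$ factors immediately. You instead work entirely at the operator level: vacuum invariance $\hat R\ket{0}=\ket{0}$, conjugation $\tilde\phi_i^\dagger=\hat R\phi_i^\dagger\hat R^{-1}$ staying within shell $i$, even fermion parity of the block $\phi_1'\tilde\phi_1^\dagger$ to justify the regrouping, and particle-number counting to pull out the shell-2 scalar. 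Your version is more self-contained (it does not presuppose the single-$j$ determinant formula, which the paper asserts rather than derives in detail) and it makes the fermionic sign bookkeeping explicit, which the block-determinant argument hides; it also makes Corollary 1 transparent, since the argument only needs the two operator sets to act on mutually orthogonal modes each closed under rotation. The paper's version is shorter given its earlier machinery and exhibits the block structure that Corollary 2 iterates. One small imprecision: $\tilde\phi_i^\dagger$ is not literally a product of creation operators but a linear combination of such products (each $a^\dagger_{j,m}$ maps to $\sum_{m'}d^j_{m'm}a^\dagger_{j,m'}$, and the conjugated product expands into a sum); since every term in that sum creates the same number of particles in the same shell, your parity and number-counting arguments go through term by term, but you should state it that way.
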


\begin{proof}

As seen in the previous section, the matrix elements of $e^{-i\beta J_y}$ for product states in a single $j$-shell is equal to the determinant of matrix $S^j$. 

Now there are two different $j$-shells. As shown in Appendix A, in the second quantization formalism, the fermion creation operator transforms under rotation within the subspace of a given $j$. Namely, when rotating $a_{j_1m_1}^\dagger$, it can only transform into another $m$-state $a_{j_1m_2}^\dagger$ belonging to $j_1$, not any $m$-state of the other shell of $j_2$. In the language of group theory, for each $j$, there is a distinct irreducible representation (irrep) of SO(3). Rotation operations occur within a given irrep but do not transform between irreps. 

Therefore, the matrix $S$, with the matrix elements of $e^{-i\beta J_y}$ in the two-$j$ manybody bases,
\begin{equation}
\bra{\phi^\prime} e^{-i\beta J_y}\ket{\phi}  = \operatorname{det}(S), \nonumber
\end{equation}
must be of a block-diagonal form composed of $S^{j_1}$ and $S^{j_2}$
\begin{equation}\label{blockM}
S = \begin{bmatrix}
 S^{j_1} & \\ 
  & S^{j_2}
\end{bmatrix},
\end{equation}
where $S^{j_1}$ ($S^{j_2}$) is determined by $\ket{\phi_1}$ and $\ket{\phi_1^\prime}$ ($\ket{\phi_2}$ and $\ket{\phi_2^\prime}$) as discussed in section III.

The determinant of a block-diagonal matrix in Eq. (\ref{blockM}) is simply the product of determinants of the two blocks, namely
\begin{equation}\label{eq7}
\begin{aligned}
\bra{\phi^\prime}e^{-i\beta J_y}\ket{\phi}  &= \operatorname{det}(S) \\
&= \operatorname{det}(S^{j_1})\operatorname{det}(S^{j_2}) \\
&=  \bra{\phi_1^\prime}e^{-i\beta J_y}\ket{\phi_1} \bra{\phi_2^\prime}e^{-i\beta J_y}\ket{\phi_2} .
\end{aligned}
\end{equation}

\end{proof}

From Lemma 1 and the proof, it is straightforward to obtain the following two corollaries.
 
\begin{corollary}
If two $j$-shells have different additional quantum numbers, $\nu_1$ or $\nu_2$, making them orthogonal, then the rotation can not mix these two $j$-shells. Even if the two shells have the same $j$, the expectation value can still be factored. These additional quantum numbers include but are not limited to radial quantum number $n$, orbital angular momentum $l$, parity $\pi$, isospin $\tau$.
\end{corollary}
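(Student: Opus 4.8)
The plan is to recognize that the only property of the quantum number $j$ that Lemma 1 actually used is that rotation \emph{preserves} it: the operator $e^{-i\beta \hat J_y}$ maps a single-particle state of a given $j$ to a linear combination of states with the same $j$, never mixing distinct irreps. I would therefore isolate this feature as the true source of the block structure and show that any additional label $\nu$ behaves identically, provided the operator it labels commutes with the rotation generator $\hat J_y$ (equivalently with the full rotation operator $\hat R(\alpha,\beta,\gamma)$).

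First I would write a single-particle state in the fully-labelled form $\ket{\nu, j, m}$ and establish the selection rule
\begin{equation}
\bra{\nu', j', m'} e^{-i\beta \hat J_y}\ket{\nu, j, m} = \delta_{\nu'\nu}\,\delta_{j'j}\, d^{j}_{m'm}(\beta), \nonumber
\end{equation}
where the factor $\delta_{\nu'\nu}$ arises precisely because $[\hat\nu,\hat J_y]=0$ forces $\hat\nu$ to commute with $e^{-i\beta\hat J_y}$, so that rotation cannot change the eigenvalue $\nu$. The key point, which upgrades Lemma 1, is that this now holds \emph{even when $j'=j$}: two shells carrying the same $j$ but different $\nu$ remain strictly uncoupled by rotation.

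I would then verify the commutation $[\hat\nu,\hat J_y]=0$ case by case for the listed labels. For the isospin $\tau$ this is immediate, since $\hat J_y$ acts only on the spatial--spin sector and is the identity on isospin space. For the radial quantum number $n$ and the orbital angular momentum $l$ it follows by working in the coupled basis $\ket{n\,l\,s;\,j\,m}$, in which rotation mixes only $m$ while leaving $n,l,s,j$ untouched; in particular a given $j$ built from $l=j\pm\tfrac12$ yields two genuinely distinct, mutually orthogonal sectors. For parity $\pi$ I would use that the parity operator commutes with every proper rotation, hence with $e^{-i\beta\hat J_y}$, so parity is conserved under rotation as well.

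Finally I would close the argument exactly as in Lemma 1. The enlarged single-particle label $(\nu,j)$ indexes orthogonal blocks, so the many-body overlap $\bra{\phi'} e^{-i\beta \hat J_y}\ket{\phi}$ reduces to the determinant of a matrix that is block-diagonal across the $\nu$-sectors; the determinant of a block-diagonal matrix factorizes into the product over blocks, giving the stated factorization of the expectation value. The hard part will not be the determinant step, which is purely algebraic and identical to Lemma 1, but rather stating the commutation condition cleanly enough that it covers the open-ended list ``and so on'': the safe formulation is that the factorization holds for \emph{any} additional quantum number whose operator commutes with $\hat J_y$, with $n$, $l$, $\pi$, and $\tau$ being the physically relevant special cases.
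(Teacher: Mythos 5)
Your proposal is correct and follows essentially the same route the paper intends: the selection rule $\bra{\nu' j' m'}\hat R\ket{\nu j m}=\delta_{\nu'\nu}D^{j}_{m'm}$ (the paper's Eq.~(A3)) makes the matrix $S$ block-diagonal across $\nu$-sectors even when $j'=j$, and the determinant then factorizes exactly as in Lemma~1. Your explicit case-by-case check that $n$, $l$, $\pi$, and $\tau$ commute with the rotation generators simply spells out what the paper leaves implicit.
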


\begin{corollary}
For cases of more $j$-shells than two, the same result holds.
\end{corollary}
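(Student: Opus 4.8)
The plan is to establish Corollary 2 by induction on the number of shells $N$, taking Lemma 1 as the base case $N=2$. The inductive hypothesis states that for any $N-1$ mutually distinct shells the matrix element of $e^{-i\beta J_y}$ between total product states factorizes into a product of $N-1$ single-shell matrix elements. The whole argument rests on the single fact already used in Lemma 1: since each $j$ labels a distinct irrep of SO(3), the rotation operator acts within each shell and never transports amplitude between shells.

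First I would fix a shell-ordered convention, writing the total state as $\ket{\phi} = \phi_1^\dagger \phi_2^\dagger \cdots \phi_N^\dagger \ket{0}$ with each $\phi_i^\dagger$ built from creation operators of shell $j_i$ only, and likewise for $\ket{\phi^\prime}$. Grouping the first $N-1$ shells into a composite factor $\ket{\Phi} = \phi_1^\dagger \cdots \phi_{N-1}^\dagger \ket{0}$ (and $\ket{\Phi^\prime}$ analogously), the rotation still does not mix this composite block with shell $j_N$, so the two-block decomposition of Lemma 1 applies verbatim to the partition $(\Phi, \phi_N)$:
\begin{equation}
\bra{\phi^\prime} e^{-i\beta J_y} \ket{\phi} = \bra{\Phi^\prime} e^{-i\beta J_y} \ket{\Phi} \, \bra{\phi_N^\prime} e^{-i\beta J_y} \ket{\phi_N}.
\end{equation}
Applying the inductive hypothesis to the first factor resolves $\ket{\Phi}$ into the product over shells $j_1, \dots, j_{N-1}$, and the induction closes.

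Alternatively, I could bypass induction and argue in one step at the level of the determinant. The matrix $S$ whose determinant produces the many-body matrix element has rows and columns indexed by the individual single-particle states appearing in $\ket{\phi^\prime}$ and $\ket{\phi}$. Any entry connecting a state of shell $j_a$ to a state of shell $j_b$ with $a \ne b$ vanishes, so ordering rows and columns by shell renders $S$ block-diagonal with $N$ blocks $S^{j_1}, \dots, S^{j_N}$. The determinant of a block-diagonal matrix then gives $\operatorname{det}(S) = \prod_{i=1}^{N} \operatorname{det}(S^{j_i})$, which is precisely the claimed factorization.

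The main obstacle is not the algebra but the fermionic sign bookkeeping. Because creation operators of different shells anticommute, regrouping them to isolate one shell --- or permuting rows and columns of $S$ into block form --- generically produces sign factors. The point I would make explicit is that the identical shell-ordered convention adopted for both $\ket{\phi}$ and $\ket{\phi^\prime}$ forces the row and column permutations bringing $S$ to block-diagonal form to be the same, so their signs enter as $(\pm 1)^2 = 1$ and cancel. This is exactly what guarantees that no stray global sign survives and that the factorization holds with unit coefficient.
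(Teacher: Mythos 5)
Your proposal is correct and matches the paper's (implicit) argument: the paper offers no separate proof of Corollary 2, asserting only that it follows from Lemma 1 and its proof, which is exactly your second, one-step observation that the matrix $S$ is block-diagonal with $N$ blocks so $\operatorname{det}(S)=\prod_i \operatorname{det}(S^{j_i})$. Your induction and the explicit attention to fermionic sign cancellation under a fixed shell-ordering are reasonable elaborations of the same idea rather than a different route.
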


With Lemma 1 and the two corollaries, we can now prove the following theorem.

\begin{Theorem}
Following Lemma 1 and Corollary 1, for any two $j$-shells with $j_1\ne j_2$, or $j_1= j_2$ but $\nu_1 \ne \nu_2$, the matrix of the projection operator for the  total product states can be written as 
\begin{eqnarray}\label{Theorem1}
 \bra{\phi^\prime} \hat P^J_{M^\prime M} \ket{\phi}  &&= \sum_{ J_1, J_2} ( \bra{\phi_1^\prime } \hat P_{M_1^\prime M_1}^{J_1} \ket{\phi_1} \bra{\phi_2^\prime} \hat P_{M_2^\prime M_2}^{J_2} \ket{\phi_2} \nonumber \\
&& \bra{J_1M_1^\prime J_2M_2^\prime}\ket{JM^\prime} \bra{J_1M_1 J_2M_2}\ket{JM}  ).
\end{eqnarray}
\end{Theorem}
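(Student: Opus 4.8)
The plan is to push the whole computation back onto a single rotation-operator matrix element, exactly as in Section III, and then collapse it with one standard angular-momentum integral. First I would write the left-hand side of Eq.~(\ref{Theorem1}) from the integral definition (\ref{AMP}), where the Wigner function enters complex-conjugated in the standard convention,
\[
\bra{\phi'}\hat P^J_{M'M}\ket{\phi} = \frac{2J+1}{8\pi^2}\int d\Omega\, D^{J*}_{M'M}(\Omega)\,\bra{\phi'}\hat R(\Omega)\ket{\phi},
\]
so that everything is reduced to the rotation matrix element $\bra{\phi'}\hat R(\Omega)\ket{\phi}$ in the two-shell product basis.

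Next I would extend Lemma~1 from $e^{-i\beta J_y}$ to the full Euler rotation $\hat R(\alpha,\beta,\gamma)=e^{-i\alpha J_z}e^{-i\beta J_y}e^{-i\gamma J_z}$. Since each $J_z$ factor, like $J_y$, acts only within a given irrep, the block-diagonal argument of Lemma~1 applies to every factor and gives the factorization $\bra{\phi'}\hat R(\Omega)\ket{\phi}=\bra{\phi_1'}\hat R(\Omega)\ket{\phi_1}\,\bra{\phi_2'}\hat R(\Omega)\ket{\phi_2}$. The decisive step is then to express each single-shell factor back in terms of single-shell projectors. Inverting the definition (\ref{AMP}) with the orthogonality of the Wigner functions yields $\hat R(\Omega)=\sum_{JMK}D^{J}_{MK}(\Omega)\hat P^{J}_{MK}$; sandwiching this between the $J_z$-eigenstates $\bra{\phi_i'}$ and $\ket{\phi_i}$, which carry fixed projections $M_i'$ and $M_i$, collapses the $M,K$ sums to the single identity
\[
\bra{\phi_i'}\hat R(\Omega)\ket{\phi_i} = \sum_{J_i} D^{J_i}_{M_i'M_i}(\Omega)\,\bra{\phi_i'}\hat P^{J_i}_{M_i'M_i}\ket{\phi_i},\qquad i=1,2.
\]

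Substituting these two expansions into the factorized rotation matrix element and back into the integral leaves a product of three Wigner functions under $\int d\Omega$, with all state-dependent data carried outside by the single-shell projection matrix elements. I would then invoke the standard result
\[
\int d\Omega\, D^{J*}_{M'M}D^{J_1}_{M_1'M_1}D^{J_2}_{M_2'M_2} = \frac{8\pi^2}{2J+1}\,\bra{J_1M_1'J_2M_2'}\ket{JM'}\,\bra{J_1M_1J_2M_2}\ket{JM},
\]
which follows from the Clebsch--Gordan series for $D^{J_1}D^{J_2}$ together with the orthogonality of the $D$-functions. The prefactor $8\pi^2/(2J+1)$ cancels the normalization $(2J+1)/8\pi^2$ of the projector, and the remaining double sum over $J_1,J_2$ reproduces Eq.~(\ref{Theorem1}).

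The main obstacle I anticipate is careful bookkeeping rather than any deep difficulty. One must check that the free magnetic labels $M_1,M_1',M_2,M_2'$ feeding the single-shell projectors are exactly the arguments of the two Clebsch--Gordan coefficients, and that the conservation laws $M=M_1+M_2$ and $M'=M_1'+M_2'$ are enforced automatically by the vanishing of those coefficients otherwise, consistent with the additivity of $J_z$ across shells. Equal attention is needed to fix the complex-conjugation convention of the $D$-function in (\ref{AMP}), so that the three-$D$ integral is applied in its correct form, and to confirm that Lemma~1 indeed survives for the $J_z$ rotations. The extension to more than two shells then follows at once from Corollary~2 by iterating the coupling.
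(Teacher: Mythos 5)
Your proposal is correct and follows essentially the same route as the paper's proof: factorize the rotation matrix element across the two shells via Lemma 1, re-expand each single-shell factor in terms of projectors using the completeness relation, and collapse the resulting triple product of Wigner functions with the standard Clebsch--Gordan integral. The only cosmetic difference is that you keep the full three-angle $D$-function integral whereas the paper works with the reduced $\beta$-integral over small-$d$ functions (the $\alpha,\gamma$ integrations being trivial for $J_z$-eigenstates), which changes nothing of substance.
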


\begin{proof}
Denote $J_z\ket{\phi_1} = M_1\ket{\phi_1}$, and for each product state in the discussion. Calculate expectation values of the projection operator:
\[
\bra{\phi^\prime} \hat P^J_{M^\prime M} \ket{\phi}  = \dfrac{2J+1}{2} \int_0^{\pi}\sin\beta d\beta \ d^J_{M^\prime M}\bra{\phi^\prime}e^{-i\beta J_y}\ket{\phi} 
\]
Using Lemma 1, we write
\begin{eqnarray}\label{totalphi}
\bra{\phi^\prime} \hat P^J_{M^\prime M} \ket{\phi} &&  = \dfrac{2J+1}{2} \int_0^{\pi}\sin\beta d\beta ~( d^J_{M^\prime M}\times \nonumber \\
&& \bra{\phi_1^\prime}e^{-i\beta J_y}\ket{\phi_1} \bra{\phi_2^\prime}e^{-i\beta J_y}\ket{\phi_2} )
\end{eqnarray}
Plug the identity operator  (see Appendix A) 
\begin{equation}
 I = \sum_{J M} \hat P^J_{MM} = \sum_{\nu J M}\ket{\nu J M}\bra{\nu J M} \nonumber
 \end{equation}
into $\bra{\phi_1^\prime}e^{-i\beta J_y}\ket{\phi_1} $ twice, we get
\begin{eqnarray}\label{phi1}
 && \bra{\phi_1^\prime}e^{-i\beta J_y}\ket{\phi_1} \nonumber \\
 = && \sum_{\nu_1 J_1 M_1 M_1^\prime} \bra{\phi_1^\prime}\ket{\nu_1J_1M_1^\prime}\bra{\nu_1J_1M_1^\prime}e^{-i\beta J_y}\ket{\nu_1J_1 M_1}\bra{\nu_1J_1 M_1}\ket{\phi_1} \nonumber 
 \end{eqnarray}
The middle term in the above summation is nothing but the definition of the small-$d$ function, $d^J_{M^\prime M}$. Considering that $\sum_\nu \ket{\nu J M^\prime}\bra{\nu J M}$ is the angular momentum projection operator $\hat P_{M^\prime M}^J $, we obtain
\begin{equation}
\bra{\phi_1^\prime}e^{-i\beta J_y}\ket{\phi_1}  =  \sum_{ J_1 } \bra{\phi_1^\prime}\hat P_{M_1^\prime M_1}^{J_1} \ket{\phi_1} d^{J_1} _{M_1^\prime M_1}.
\end{equation}
Similarly,
\begin{equation}\label{phi2}
\bra{\phi_2^\prime}e^{-i\beta J_y}\ket{\phi_2} =  \sum_{ J_2 } \bra{\phi_2^\prime}\hat P_{M_2^\prime M_2}^{J_2} \ket{\phi_2} d^{J_2}_{M_2^\prime M_2}.
\end{equation}

Putting Eqs. (\ref{phi1}) and (\ref{phi2}) into Eq. (\ref{totalphi}), we get
\begin{eqnarray}
&& \bra{\phi^\prime} \hat P^J_{M^\prime M} \ket{\phi} \nonumber \\
= && \dfrac{2J+1}{2} \int_0^{\pi}\sin\beta d\beta \ d^J_{M^\prime M}  \bra{\phi_1^\prime}e^{-i\beta J_y}\ket{\phi_1} \bra{\phi_2^\prime}e^{-i\beta J_y}\ket{\phi_2} \nonumber \\
= && \dfrac{2J+1}{2} \sum_{ J_1, J_2}  \bra{\phi_1^\prime} \hat P_{M_1^\prime M_1}^{J_1} \ket{\phi_1} \bra{\phi_2^\prime} \hat P_{M_2^\prime M_2}^{J_2} \ket{\phi_2}\times \nonumber \\
&& \hspace{2cm} \int_0^{\pi} \sin\beta d\beta \ d^J_{M^\prime M} d^{J_1}_{M_1^\prime M_1} d^{J_2}_{M_2^\prime M_2} \nonumber
\end{eqnarray}
Using the property of Wigner $d$-function
\begin{equation}
\begin{aligned}
 & \dfrac{2J+1}{2}  \int_0^{\pi} \sin\beta d\beta \ d^J_{MK} d^{J_1}_{M_1K_1} d^{J_2}_{M_2K_2} \\
 = & \left\langle J_1 M_1 J_2 M_2 \mid J M\right\rangle \left\langle J_1 K_1 J_2 K_2 \mid J K\right\rangle , \nonumber
 \end{aligned}
\end{equation}
we finally get
\begin{eqnarray}
 \bra{\phi^\prime} \hat P^J_{M^\prime M} \ket{\phi}  &&= \sum_{ J_1, J_2} ( \bra{\phi_1^\prime } \hat P_{M_1^\prime M_1}^{J_1} \ket{\phi_1} \bra{\phi_2^\prime} \hat P_{M_2^\prime M_2}^{J_2} \ket{\phi_2} \nonumber \\
&& \bra{J_1M_1^\prime J_2M_2^\prime}\ket{JM^\prime} \bra{J_1M_1 J_2M_2}\ket{JM}  )\nonumber
\end{eqnarray}

\end{proof}

\section{ discussions }

Among the set of quantum numbers that uniquely represent a nucleus, those related to angular momentum are $j$ and $m$ in single-particle space or $J$ and $M$ in many-particle states. The remaining degrees of freedom, denoted by $\nu$, are involved to determine exchange properties of identical fermions. Conventional angular momentum coupling methods usually treat nucleons as non-identical particles, leaving the imposition of identical-particle restriction until the last step \cite{Talmi1963}. For example, in the $m$-scheme, a large number of redundant configurations are freely generated, but they are not the physical states required by antisymmetry and must be removed at additional computational cost.

We treat the angular momentum problem differently. In Theorem 1, we make full use of the properties of the projection operator to distinguish the role of single-$j$ and multi-$j$ shells in angular momentum coupling. Fermionic antisymmetry requires that the restriction on the allowed total states be completely absorbed into the single $j$ space, but this is only the initial step of the process. For different $j$ shells, as shown in Eq.  (\ref{Theorem1}), any total angular momentum that obeys the coupling selection rule is allowed, as if the coupling was performed on non-identical particles. Clearly, a great advantage of our method is that the final angular momentum coupling is only performed after the physical angular momentum states in each $j$-shell are prepared.

We can list some additional outstanding advantages of our method in practical applications.

1.
Theorem 1 is a recurrence relation. If there are more than two $j$-shells, one only needs to apply Eq. (\ref{Theorem1}) successively. For the construction of many-nucleon states with good angular momentum, our method can in principle be applied to a large model space with multiple $j$-shells without numerical difficulties. 

2.
Theorem 1 may replace the traditional $J$-scheme \cite{Talmi1963} for constructing total angular-momentum states. Abandoning the CFP method completely, we first prepare eigenstates of angular momentum in single $j$-shells by diagonalizing the projection operator matrix of Eq. (\ref{single-j-Matrix}). Then 
total angular-momentum states for  multiple $j$-shell spaces can 
be easily obtained by the usual angular momentum coupling via 
CG-coefficients, as Eq. (\ref{Theorem1}) shows. Our method may help to reduce computational intensity for calculations such as those in Ref. \cite{Dytrych2016}.

3.
Lemma 1 may be used in selecting $J$-states in the last step of the $m$-scheme \cite{Talmi1963}. Suppose in a usual LSSM calculation \cite{BrownNotes}, after diagonalization of the Hamiltonian 
\begin{equation}
\hat H\ket{\phi_i} = E\ket{\phi_i},
\end{equation}
each $\ket{\phi_i}$ is generally a superposition of multi-shell product state, from which many energy eigenstates with high degeneracy can be obtained. To determine eigenstates of angular momentum out of $\ket{\phi_i}$, the usual approach is to calculate the expectation value of $\hat J^2$, from which one can construct a different form of AMP operator \cite{Comments}. In practical applications using this operator, one ends up with a set of algebraic equations \cite{Ring-Schuck}. As described in Sec. II, we calculate the projection operator matrix essentially for single-$j$ spaces with the projection operator defined in (\ref{AMP}). Using Lemma 1, the calculation may be greatly simplified and more efficient. Corollary 2 shows that it is easy to generalize Lemma 1 to the cases with an arbitrary number of $j$-shells.

\section{ Conclusions }

During the late 1920s and early 1930s, Eugene Wigner made profound contributions to the study of angular momentum and its projection in the framework of quantum mechanics. Wigner’s work, by introducing the $D$-function, provided an elegant mathematical tool to analyze the projection of quantum states in a rotationally invariant way and to express how these states transform under rotation. Inspired by Wigner's work, the Generator Coordinate Method (GCM) was established \cite{Hill1953,Griffin1957}, and has become a powerful many-body method for nuclear structure calculations \cite{Reinhard1987,Chen2013,Yao2014,Egido2016}.  It was discussed \cite{Sun2016} that the angular momentum projection can be viewed as a special application of the GCM by making use of the completeness of the Wigner $D$-function to expand the generator function in the GCM. In the 1970s and 1980s, people began to use the theory of angular momentum projection to restore the rotation symmetry that was broken in mean-field calculations in deformed nuclei \cite{Ring-Schuck,Hara1979,Schmid1987}, and established the Projected Shell Model \cite{Sun2016,Hara1991,Hara-Sun}. Since then, angular momentum projection has become the mainstream method in all kinds of beyond-mean-field calculations \cite{Sun1994a,Sun1994b,Sun1998,Otsuka2001,Rodriguez2005,
Sheihk2008,Chen2008,Wang2014,Johnson2017,Bally2019,Dao2022,Kaneko2023,Wang2024}. However, to the best of our knowledge, angular momentum projection, as a practical tool for angular momentum coupling of manybody states, has not been considered for shell model based on spherical basis.

The present research is based entirely on Wigner's theory of angular momentum and continues to explore the power of the angular momentum projection operator. The new method we proposed changes the traditional way of thinking for construction of angular momentum of fermionic quantum many-body states. Although the discussion in this article is concentrated on $J$ and $m$ schemes, this method may be generally applied. It is applicable in any many-nucleon problems (including reaction and fission theories) where (1) the problem requires good total angular momentum and (2) the spherical model basis involves multiple $j$ shells. More applications of the method proposed in this article are being studied and the efficiency of our method is to be verified, but we expect that our method may help to solve some standing problems related to angular momentum in shell model and other applications.

\begin{acknowledgments}
We thank B. A. Brown, F.-Q. Chen, C. W. Johnson, T. Mizusaki, and L.-J. Wang for helpful comments and suggestions. This work is supported by the National Natural Science Foundation
of China (Grant No. 12235003). 
\end{acknowledgments}

\appendix
\section{ Basic elements of AMP}

In this appendix, a basic introduction to the AMP method and its connection with the usual angular momentum coupling of Fermions are given.

The Wigner $D$-function is the matrix presentation of the rotation operator $\hat R$ in the basis of angular momentum eigenstates $\{\ket{jm}\}$ specified by Euler angles,
\begin{equation}
\begin{aligned}
& D_{m^{\prime} m}^j(\alpha, \beta, \gamma)=\left\langle j m^{\prime}|\hat R(\alpha, \beta, \gamma)| j m\right\rangle , \\
& \quad \hat R(\alpha, \beta, \gamma)=e^{-i \alpha J_z} e^{-i \beta J_y} e^{-i \gamma J_z} .
\end{aligned}
\end{equation}
Here $\alpha$, $\beta$, $\gamma$ are the Euler angles, sometimes denoted with one symbol, $\Omega$. The non-trivial part of $D$ is the small-$d$ function
\begin{equation}
\begin{aligned}
& D_{m^{\prime} m}^j=e^{-i m^{\prime} \alpha} d_{m^{\prime} m}^j(\beta) e^{-i m \gamma} , \\
& d_{m^{\prime} m}^{j}(\beta)=\bra{ j m^{\prime}}e^{-i \beta J_y}\ket{ j m }.
\end{aligned}
\end{equation}

Including other quantum numbers denoted by $\nu$, which are unchanged by rotation, we can write
\begin{equation}
 \left\langle\nu^\prime j m^{\prime}|\hat R(\alpha, \beta, \gamma)|\nu j m\right\rangle = \delta_{\nu^\prime \nu} D_{m^{\prime} m}^j(\alpha, \beta, \gamma) .
\end{equation}
It should be noted that $\ket{\nu j m }$ is a generic notation, which may either be a single-particle or a many-body state. 

In the second quantization formalism, the creation operator of Fermions transforms under rotation in the following way
\begin{equation}
\hat R~a_{jm}^\dagger ~\hat R^\dagger = D^j_{m^{\prime} m} a_{jm^{\prime}}^\dagger .
\end{equation}
The orthogonality relation of $D$ is 
\begin{equation}
\int d\Omega D_{m^{\prime} k^{\prime}}^{j^{\prime}}(\Omega)^* D_{m k}^j(\Omega)=\frac{8 \pi^2}{2 j+1} \delta_{m^{\prime} m} \delta_{k^{\prime} k} \delta_{j^{\prime} j} ,
\end{equation}
where $d\Omega=\sin\beta ~d\alpha ~d\beta ~d\gamma$. From Eqs. (A3) and (A5), we can obtain the following relation:
\begin{equation}
\begin{aligned}
& \frac{2j+1}{8\pi^2}\int \sin\beta ~d\alpha ~d\beta ~d\gamma\ D_{m k}^{j}(\alpha, \beta, \gamma)^* \hat R(\alpha,\beta,\gamma) \\
=&\sum_{\nu} \ket{\nu jm}\bra{\nu jk} ,
\end{aligned}
\end{equation}
which is defined as the angular momentum projection operator $\hat P^j_{mk}$. 

To get a physical picture of this projection operator, think of the following special case. Suppose we have a product state $\ket{\phi_K}$ which is an eigenstate of $J_z$, i.e. $J_z\ket{\phi_K}=K\ket{\phi_K}$. Decompose it in terms of angular momentum eigenstates:
\begin{equation}
\ket{\phi_K} = \sum_{\nu J}\ket{\nu J K}\bra{\nu J K} \ket{\phi_K}  =\sum_J \hat P_{KK}^J \ket{\phi_K} .
\end{equation}
This means that the projector $\hat P_{KK}^J$ extracts the angular momentum $J$ components from $\ket{\phi_K}$. By calculating the expectation value of $\hat P_{KK}^J$, we thus obtain the angular momentum distribution, the probability of finding $J$ in $\ket{\phi_K}$
\begin{equation}
\bra{\phi_K}\hat P_{KK}^J\ket{\phi_K} = \sum_{\nu}|\bra{\nu J K \ket{\phi_K}}^2 = |c^J|^2.
\end{equation}

To understand how the projection method works, let's look at a simple example:
\[
\ket{\phi}_{\frac{1}{2}} = \dfrac{1}{\sqrt{3}} \left( \ket{\nu_1 \dfrac{1}{2} \dfrac{1}{2}} +  \ket{\nu_2 \dfrac{3}{2} \dfrac{1}{2}} +  \ket{\nu_3 \dfrac{3}{2} \dfrac{1}{2}}  \right) ,
\]

\[
\begin{aligned}
\hat P_{\frac{1}{2}\frac{1}{2}}^{\frac{1}{2}} \ket{\phi}_{\frac{1}{2}} &=  \dfrac{1}{\sqrt{3}}  \ket{\nu_1 \dfrac{1}{2} \dfrac{1}{2}}, \\
\hat P_{\frac{1}{2}\frac{1}{2}}^{\frac{3}{2}} \ket{\phi}_{\frac{1}{2}} &=  \dfrac{1}{\sqrt{3}}  \left( \ket{\nu_2 \dfrac{3}{2} \dfrac{1}{2}} + \ket{\nu_3 \dfrac{3}{2} \dfrac{1}{2}} \right) ,
\end{aligned} 
\]
which leads to
\[
|c^{\frac{1}{2}}|^2=\bra{\phi}\hat P_{\frac{1}{2}\frac{1}{2}}^{\frac{1}{2}}  \ket{\phi} = \dfrac{1}{3}, \quad |c^{\frac{3}{2}}|^2=\bra{\phi}\hat P_{\frac{1}{2}\frac{1}{2}}^{\frac{3}{2}}  \ket{\phi} = \dfrac{2}{3} .
\]

The above AMP method may seem trivial for single particle states. However, it shows its powerful effect on manybody states. The key is perhaps that the projection operator uses the right language -- the rotation operator -- which rotates the many-body state (with good angular momentum) while respecting the exchange properties of fermions.  In contrast, the traditional angular momentum coupling method usually works for non-identical particles, which is inconsistent with the second quantization form of fermionic states.

\section{ Derivation of Equations (\ref{det}) and (\ref{blockM})}

We first give the derivation for Eq. (\ref{det}) through an example of 2-particle states, which can be easily  extended to cases with any number of particles. 

Suppose $\ket{\phi_b}$ is 2-particle state of a $j$-shell, expressed as
\[
\ket{\phi_b} =a_{j m_1}^{\dagger}a_{j m_2}^{\dagger}\ket{0} ,
\]
and $\ket{\phi_a}$ as
\[
\ket{\phi_a} =a_{j m_3}^{\dagger}a_{j m_4}^{\dagger}\ket{0} .
\]
Applying the rotation operator on the ket state $\ket{\phi_b}$, it becomes
\[
e^{-i\beta J_y}\ket{\phi_b}=\sum_{m_1'm_2'}d^{j}_{m_1' m_1}d^{j}_{m_2' m_2}a_{j m'_1}^\dagger a_{j m'_2}^\dagger \ket{0} .
\]
The inner product of $\ket{\phi_a}$ and the rotated $\ket{\phi_b}$ is
\[
\bra{\phi_a}e^{-i\beta J_y}\ket{\phi_b} =\sum_{m_1'm_2'} \bra{0}a_{j m_4}a_{j m_3}d^{j}_{m_1' m_1}d^{j}_{m_2' m_2}a_{j m'_1}^\dagger a_{j m'_2}^\dagger \ket{0} .
\] 

The only non-vanishing terms in the above summation are precisely the determinant of the $2\times 2$ matrix constructed from the Wigner $d$-function for the $j$-shell
\[
S^{j} = \begin{bmatrix}
d^{j}_{m_3 m_1} & \ d^{j}_{m_3 m_2} \vspace{0.5cm} \\ 
d^{j}_{m_4 m_1}  & \ d^{j}_{m_4 m_2} 
\end{bmatrix}.
\]

To give the proof for the block-diagonal form of matrix in Eq. (\ref{blockM}), let us suppose $\ket{\phi_1} = a^\dagger_{j_1m_1}a^\dagger_{j_1m_2}\ket{0}$, $\ket{\phi_2} = a^\dagger_{j_2m_3}a^\dagger_{j_2m_4}\ket{0}$,  $\ket{\phi_1'} = a^\dagger_{j_1m'_1}a^\dagger_{j_1m'_2}\ket{0}$ and $\ket{\phi'_2} = a^\dagger_{j'_2m'_3}a^\dagger_{j'_2m'_4}\ket{0}$. 
We then write the details of Eq. (\ref{eq7}) as
\[
\begin{aligned}
\bra{\phi'}e^{-i\beta J_y}\ket{\phi} = \sum_{M_1M_2M_3M_4} &\bra{0} a_{j_2m'_4}a_{j_2m'_3}a_{j_1m'_2}a_{j_1m'_1}\\&d^{j_1}_{M_1 m_1}d^{j_1}_{M_2 m_2}d^{j_2}_{M_3 m_3}d^{j_2}_{M_4 m_4}\\&a^\dagger_{j_1M_1}a^\dagger_{j_1M_2}a^\dagger_{j_2M_3}a^\dagger_{j_2M_4} \ket{0} .
\end{aligned}
\]
Notice that the fermion operator transforms under rotation only within the subspace of a given $j$. For example, if $M_1=m'_3$ or $m'_4$, the term in the summation must vanish because $j_1\neq j_2$. In other words, the summation over $M_1$ and $M_2$ (subindex of $j_1$) runs independently with that over $M_3$ and $M_4$ (subindex of $j_2$). Therefore, the resulting matrix for the multiple $j$-shells in Lemma 1 must be of block-diagonal form 
\[
S = 
\begin{bmatrix}
S^{j_1} &  \\ 
  & S^{j_2} 
\end{bmatrix} = 
\begin{bmatrix}
d^{j_1}_{m'_1 m_1} & d^{j_1}_{m'_1 m_2} & & \vspace{0.2cm} \\ 
d^{j_1}_{m'_2 m_1}  & d^{j_1}_{m'_2 m_2} & & \vspace{0.2cm} \\
 & & d^{j_2}_{m'_3 m_3} & d^{j_2}_{m'_3 m_4} \vspace{0.2cm}  \\
 & & d^{j_2}_{m'_4 m_3} & d^{j_2}_{m'_4 m_4} 
\end{bmatrix} ,
\]
specified by $j$.

\section{ New derivations of some related propositions }

As shown before, our theory can greatly simplify the angular momentum construction of nuclear manybody states. This appendix provides two examples showing how the method can be applied in practical calculations. Both textbook examples are known propositions in basic nuclear theory, usually derived via standard angular momentum coupling methods. The derivations here are new. A more detailed discussion of the examples is given in \cite{Guo-Thesis}, and the application of the method to solving existing spin-distribution problems will be published  elsewhere. 

\subsection{ Particle-hole symmetry in angular momentum }

For any single-$j$ shell, there exists a particle-hole mirror symmetry with respect to the mid-shell. In other words, knowledge on angular momentum of hole states is sufficient for knowing that of the corresponding particle states.

\begin{proof}
Let's denote a fully-occupied single $j$-shell state as $\ket{\phi}$. Then the two-hole (2-h) state can be written as
\[
\ket{\varphi}=a_{jm_1}a_{jm_2}\ket{\phi} .
\]

As we already know $\hat R\ket{\phi}=\ket{\phi}$, we can derive the expectation value of rotation of this 2-h state as
\[
\begin{aligned}
\bra{\varphi}e^{-i\beta J_y}\ket{\varphi}&=\bra{\phi}a_{jm_2}^\dagger a_{jm_1}^\dagger e^{-i\beta J_y}a_{jm_1}a_{jm_2}\ket{\phi}\\
&= \bra{\phi}a_{jm_2}^\dagger a_{jm_1}^\dagger d^j_{m_im_1}d^j_{m_km_2}a_{jm_i}a_{jm_k}e^{-i\beta J_y}\ket{\phi}\\
&=\bra{\phi}a_{jm_2}^\dagger a_{jm_1}^\dagger d^j_{m_im_1}d^j_{m_km_2}a_{jm_i}a_{jm_k}\ket{\phi}\\
&=\operatorname{det}( S^j ) ,
\end{aligned}
\]
where $S^j$ is equal to that of the two-particle (2-p) state $a_{jm_1}^\dagger a_{jm_2}^\dagger\ket{0}$. 

This shows that under rotation, the 2-h state behaves exactly the same as the corresponding 2-p state. As a result, they must have the same angular momentum. This conclusion holds for any $n$ number of particles-holes, suggesting that there exists a particle-hole mirror symmetry in any $j$ shell with respect to the mid-shell.

\end{proof}

\subsection{ Pairing and angular momentum }

The angular momentum of a fully-paired state must be even. In other words,  a fully-paired state only allows eigenstates of even angular momentum.
\begin{proof}
Consider the product state for a single $j$-shell occupied by four particles: 
\begin{equation}
\ket{\phi}=a_{j_1m_1}^{\dagger}a_{j_1m_2}^{\dagger}a_{j_1m_3}^{\dagger}a_{j_1m_4}^{\dagger}\ket{0} ,
\end{equation}
where $m_1+m_2=0$, $m_3+m_4=0$, so that the total $M=\sum_i m_i =0$. 

The angular momentum distribution of $\ket{\phi}$ is 
\begin{equation}\label{AMPdis}
\bra{\phi}\hat P^J_{00}\ket{\phi}=\dfrac{2I+1}{2}\int_0^{\pi}\sin\beta d\beta ~d^J_{00}(\beta)\bra{\phi}e^{-i\beta J_y}\ket{\phi},
\end{equation}
where the expectation value of the rotation operator is 
\begin{equation}
\bra{\phi}e^{-i\beta J_y}\ket{\phi}=\operatorname{det}\left( S^{j} \right)
\end{equation}
Because of the paired condition, this expectation has a symmetry with $\beta=\frac{\pi}{2}$. To see this, use the property of the Wigner small-$d$ function:
\begin{equation}
d_{m^{\prime}, m}^j(\pi-\beta)=(-1)^{j+m^{\prime}} d_{m^{\prime},-m}^j(\beta).
\end{equation}
Due to this property, $S^{j}(\pi - \beta)$ can be obtained from the matrix $S^{j}(\beta)$ by swapping the first and second  columns, the third and fourth columns, and multiplying each $i$-th row by $(-1)^{j + m_i}$. This operation can be extended to cases of any $n$ pairs. For the resulting $2n \times 2n$ matrix, performing $n$ column swaps and multiplying each row by a factor leads to the determinant being multiplied by a factor of $(-1)^n \cdot (-1)^{2nj} = (-1)^{(2j + 1)n}$. For fermions, $j$ is a half-odd-integer, so that $2j + 1$ is an even integer, which ensures the determinant to be an even function that is symmetric about $\beta = \frac{\pi}{2}$. 

Note that $d^J_{00}(\beta)=P_J(\cos(\beta))$, and the Legendre polynomial is an even (odd) function with respect to $\beta=\dfrac{\pi}{2}$ when $J$ is even (odd). Therefore, when $J$ is odd, the integral in Eq. (\ref{AMPdis}) must be zero, indicating that in the fully-paired scenarios, total angular momentum of the state must be even.

This proof can be generalized to the multi-shell case, where Lemma 1 ensures the conclusion.

\end{proof}


\end{document}